\setlist[itemize]{label=\textbullet}
\newcolumntype{L}{>{\raggedright\arraybackslash}p{0.3\textwidth}} % defining a new column type for wrapping
\newcolumntype{P}[1]{>{\centering\arraybackslash}p{#1}} % Centered column with fixed width
\tikzstyle{every picture} += [>=stealth]
\tikzset{axis/.style={semithick, line join=miter}}
\def\@seccntformat#1{\csname the#1\endcsname.\quad}
\newcommand{\hidefastcompile}[1]{\ifthenelse{\boolean{fastcompile}}{}{#1}}
\definecolor{orange}{rgb}{0.85,0.33,0.13} % 217,85,33
\definecolor{green}{rgb}{0.13,0.85,0.33}
\definecolor{purple}{rgb}{0.33,0.13,0.85}
\definecolor{lime}{rgb}{0.65,0.85,0.13}
\definecolor{blue}{rgb}{0,0,0.85}
\pgfplotsset{colormap={tricolormap}{color=(orange) color=(green) color=(purple)},
  colormap={quadcolormap}{color=(orange) color=(lime) color=(blue) color=(purple)}}
\pgfplotsset{compat=1.15}
  \renewcommand{\todo}[2][1]{}
  \newcommand{\deledit}[1]{}
  \newcommand{\deledit}[1]{{\color{orange} \sout{#1}}}
\spnewtheorem{assumption}{Assumption}{\bfseries}{\itshape}
\crefname{assumption}{assumption}{assumptions}
\title{\bf\sffamily Proof of Sampling: A Nash Equilibrium-Based Verification Protocol for
  Decentralized Systems}
  \author[1]{Yue Zhang\thanks{Equal contribution}}
    \author[1,2]{Shouqiao Wang$^\star$}
        \author[3]{Sijun Tan$^\star$}
    \author[3]{\authorcr Xiaoyuan Liu}
    \author[1,2]{Ciamac C. Moallemi}
    \author[1,3]{Raluca Ada Popa}
  \affil[1]{Hyperbolic Labs}
  \affil[2]{Columbia University}
  \affil[3]{UC Berkeley}
  \date{May 30, 2025}
\begin{document}
\maketitle

\begin{abstract}
  This paper introduces the Proof of Sampling (PoSP) protocol, a Nash Equilibrium-based verification mechanism, and its application to decentralized machine learning inference through spML.  
  Our protocol has a pure strategy Nash Equilibrium, compelling rational participants to act honestly. It economically disincentivizes dishonest behavior, making it costly for participants to compromise the network's integrity.
  
  In our spML protocol, we apply PoSP to decentralized inference for AI applications via a novel cryptographic protocol. The resulting protocol is much more efficient than zero knowledge proof based approaches. Moreover, we anticipate that the PoSP protocol could be effectively utilized for designing verification mechanisms within Actively Validated Services (AVS) in restaking solutions. 
  
  We further expect that the PoSP protocol could be applied to a variety of other decentralized applications. Our approach enhances the reliability and efficiency of decentralized systems, paving the way for a new generation of decentralized applications.

\end{abstract}
\section{Introduction}

In the development of decentralized protocols, it is customary to assume that honest nodes will
adhere to the established protocol. Take, for example, optimistic rollups
\cite{kalodner2018arbitrum} as a scaling solution for blockchain that aims to increase
blockchain’s transaction throughput. In this approach, a designated rollup validator processes
transactions off-chain and posts the results on the blockchain. By default, the transactions are
assumed to be executed correctly. However, if other validators detect incorrect transactions, they
can submit fraud proofs on-chain to challenge the validator who submitted the false results. The
security of optimistic rollup relies on the critical assumption that at least one rollup validator
is honest and validates the transactions. If all validators are dishonest, the on-chain
transactions could be fraudulent. To incentivize honest behavior, an economic reward system is
introduced, making it more profitable for rational nodes to act honestly. However, under this
assumption, Mamageishvil and Felten\cite{mamageishvili2023incentive} observed that in most existing decentralized
systems, the equilibrium state corresponds to a mixed strategy Nash Equilibrium, which means that
the optimal strategy for each validator is to cheat and not validate a certain percentage of the
time. %This implies a non-negligible probability of dishonest behavior that is never detected and
%introduces significant security vulnerabilities. \raluca{I commented this previous sentence based on Ciamac's point} \ccm{PoSP also has the property that a non-negligible percentage $1-p$ of inputs are never
 % validated, so it's difficult to critize other approaches for having ``non-negligible probability
  %of dishonest behavior that is never detected'', maybe make the comparison above sharper?}
%
In addition, \cite{mamageishvili2023incentive}
also proposed an alternate framework for optimistic rollups that mandates the scrutiny of the
system by all nodes. However, such a design introduces duplicated work from all nodes and
contravenes the fundamental principle of scalability, undermining the initial objective of Layer 2
design.

In this paper, we introduce the Proof of Sampling (PoSP) protocol designed to address these challenges effectively, under certain foundational assumptions. We expect that the PoSP protocol is applicable across a broad range of decentralized systems. In this paper, we introduce its application to decentralized AI inference platforms. Notably, our system achieves a pure strategy Nash Equilibrium, wherein each node outputs the correct result, a principle that is highlighted by \cite{kannan2024cryptoeconomics} as ideal for the design of secure decentralized systems. This signifies that when each node's strategy is directed towards maximizing its individual profit, the overall system maintains security.

\subsection{Main Contribution}

We made two contributions in this paper:

\begin{itemize}
    \item First, we propose Proof of Sampling (PoSP), a pure strategy Nash Equilibrium protocol to incentivize rational actors within the network to return the correct output, thereby strengthening the security and integrity of the decentralized protocol.
    \item Based on PoSP, we design spML, a concrete instantiation of the PoSP protocol for decentralized machine learning inference. Assuming all nodes act rationally, spML ensures that each node is incentivized to act correctly under some reasonable trust assumptions.
\end{itemize}

% \subsection{Design Principles}

% PoSP enhances decentralized system integrity by initially selecting an asserter who submits results without knowing which validators might review them. If challenged, $n$ validators are randomly selected, and only then do they learn each other's identities. This approach prevents initial collusion by ensuring the asserter does not know the identities of the validators, thereby eliminating the possibility of pre-arranged collusion with known associates.

% In the case where all nodes agree on the outcome, the system accepts this consensus, ensuring that the protocol functions correctly. However, if there are discrepancies, indicating dishonesty or error, those nodes are penalized. This mechanism ensures that acting honestly is not only the most straightforward but also the most beneficial strategy for all participants, thus maintaining the integrity and reliability of the system.

\subsection{Related Work}

\textbf{Game Theory for Protocol Analysis.} Employing game theory for protocol analysis, rational and economic theories have been extensively applied to explore various blockchain configurations, including Byzantine Fault Tolerance as discussed in \cite{amoussou2024committee,halaburda2021economic}, sharding strategies \cite{manshaei2018game}, proof-of-work systems \cite{badertscher2018but,garay2013rational,kiayias2016blockchain,luu2015demystifying}, proof-of-stake systems \cite{brunjes2020reward,saleh2021blockchain,schwarz2022three}, secure outsourced computation \cite{teutsch2024scalable}, and Layer 2 solutions \cite{jiang2025incentive,landis2023incentive,lee2025hollow,li2023security,mamageishvili2023incentive,sheng2024proof}. The use of slashing mechanisms, analyzed in \cite{kannan2024cryptoeconomics}, enhances network security and ideally aims to align node incentives in such a way that following the protocol-prescribed strategy always yields the most benefit, reflecting a unique Nash Equilibrium in pure strategies. A comprehensive review \cite{thibault2022blockchain} elaborates on the diverse applications of game theory across different blockchain scaling solutions. This analytical approach underscores the utility of game theory in dissecting and enhancing the security and scalability mechanisms of blockchain technology.

\textbf{Decentralized AI Inference.} The decentralized AI domain has significantly benefited from innovations like opML, as detailed by\cite{conway2024opml}, which enhances machine learning (ML) on the blockchain by facilitating the efficient processing of intricate models, alongside the introduction of Zero-Knowledge Machine Learning (zkML) by \cite{aziz2024zkvml,kang2022scaling,liu2021zkcnn,peng2025survey,sun2024zkllm,weng2021mystique,yadav2024fairproof,zhang2020zero}. These advancements in blockchain AI, opML and zkML, address scalability, security, and efficiency with distinct trade-offs. OpML enhances scalability and efficiency, yet its security may not be as robust. ZkML offers strong security through zero-knowledge proofs, yet faces challenges in scalability and efficiency due to high computational overhead.

\textbf{Game Design with Pure Strategy Nash Equilibrium.} Among the most relevant studies we reviewed, the following papers stand out for their design of game-theoretic models that achieve a unique Nash Equilibrium in pure strategies, where each node is incentivized to act honestly. This core design criterion is fundamental to our research as it aligns closely with our objectives of ensuring integrity and reliability in decentralized network environments. 

Outsourced computation within game theoretic frameworks, as discussed in studies \cite{belenkiy2008incentivizing,kupccu2015incentivized}, commonly involves a trusted third party to resolve disputes and deter collusion among coalitions. In our paper, we can use the arbitration protocol to ensure integrity without relying on a trusted third party.

Nix and Kantarcioglu \cite{nix2012contractual} implements a system where the integrity of outsourced data is maintained through a verification game involving two non-colluding cloud providers. In this model, the data owner begins by sending a query to one of the cloud providers. With a predefined probability, this query is also sent to a second provider for computation. The results from both providers are then compared to detect any discrepancies, effectively using one provider to challenge the results of the other. This approach assumes that the nodes do not collude and that there is a negligible probability of their results matching if both are incorrect. In contrast, our framework does not rely on these assumptions. We accommodate the possibility of collusion and does not rely on the assumption that non-colluding parties will always produce differing results if both are dishonest. This makes our protocol more realistic and applicable to real-world scenarios.

Dong et al.\cite{dong2017betrayal} discusses smart counter-collusion contracts for verifiable cloud computing; it utilizes game theory to foster distrust among potential colluders, effectively discouraging collusion by incentivizing betrayal among partners, making it economically disadvantageous and risky. While their model relies on the rational behavior of physically isolated nodes and assumes the presence of a trusted third party for dispute resolution, our approach offers a broader application by accommodating the possibility of irrational collusion and the control of multiple nodes by a single entity. Instead of relying on a centralized trusted third party, our model distributes trust to ensure integrity and resolve disputes, better aligning with the decentralized nature of blockchain environments and enhancing system robustness against a wider range of threats. This makes our framework more adaptable to real-world decentralized settings.

Lu et al.~\cite{lu2018enabling} presents a mechanism for decentralized blockchain systems without a trusted third party. In their model, if the results from multiple nodes differ, all nodes are penalized to prevent collusion and promote individual integrity. This approach addresses the issue that, in large networks, one party might control a significant number of nodes $n-1$, yet $n$ nodes are required to participate in each round. In contrast, our approach is more refined; it considers scenarios where a certain fraction of nodes might be controlled by one entity. In most cases, such as the application for decentralized AI inference network, we only employ a two-node system per round, randomly selecting a validator unknown to the asserter until after submission, which is a key point for our design. Additionally, our system uses blockchain-based arbitration to resolve discrepancies, ensuring that honest nodes are not unjustly penalized, thus maintaining fairness and enhancing reliability without relying on a trusted third party.

Pham et al.~\cite{pham2014optimal} outlines the use of a trusted third party for auditing outsourced computations, where discrepancies between nodes' results trigger an audit, and random checks ensure integrity even when results match. In contrast, our work  eliminates the need for a trusted auditor by leveraging distributed trust to arbitrate disputes only when results differ, enhancing decentralization and reducing computational overhead. We specifically address potential node collusion, assuming a scenario where up to a certain fraction of the network might collude or be controlled by a single party. Our paper not only simplifies the verification process by relying on result comparison and arbitration, which never happens if every node is rational, rather than correctness checks with a certain probability but also strategically mitigates collusion risks by maintaining validator anonymity until asserter's submissions are finalized.

The rest of this paper is organized as follows. In Section 2, we introduce PoSP protocol and prove that it has a unique Nash Equilibrium in pure strategies under certain conditions. In Section 3, we show a possible way to implement PoSP protocol in real world applications as an example. We design a sampling-based verification mechanism (spML), which is implemented by PoSP protocol, within a decentralized AI inference network, detailing the protocol design and its security validation. In Section 4, we conclude the paper and discuss the future extensions of our work.

%%% Local Variables:
%%% mode: latex
%%% TeX-master: "main"
%%% End:

\section{The PoSP Protocol}

The PoSP (Proof of Sampling) protocol is developed from a game theoretic perspective, focusing on the strategic interactions within a multi-agent system. This approach emphasizes the decision-making processes and the incentive structures that are essential for ensuring that the actions of all nodes, whether potentially Byzantine or inherently honest, align with the network’s objectives. Specifically, the PoSP protocol is structured to achieve a unique Nash Equilibrium in pure strategies, where every Byzantine participant is incentivized to act honestly, thereby enhancing the reliability and integrity of the system. While this section lays the theoretical groundwork, the specific application and implementation details are context-dependent. In the subsequent section, we will explore the application of PoSP in decentralized AI inference, specifically in a scenario termed spML, based on our PoSP protocol. Detailed implementation of spML will be discussed there.

\subsection{Protocol Design} \label{subsection:posp}
In this section, we propose the PoSP protocol. 
Consider a system with $N$ nodes. 
We assume that $f$ is a deterministic function, and $x$ is an input. 
An amount $B$ is paid for the following transaction computing $f$ on $x$. Let $R_A$ and $R_V$ be two positive numbers such that $R_A + R_V < B$.

\label{general-model}
\begin{enumerate}
    \item A node selected from the network serves as an asserter. This asserter calculates a value $f(x)$, with both the function $f(\cdot)$ and the input $x$ being well-known to the network, and outputs the result.

    \item With a predetermined probability $p$, a challenge protocol is triggered. If the challenge protocol is not triggered, this round concludes, and the asserter is awarded a reward denoted by $R_A$.
    
    \item If the challenge protocol is triggered, $n$ validators are randomly selected from the network of $N$ nodes, where $n\geq1$ is a predetermined integer parameter. Each validator, denoted as validator $i$, independently computes $f(x)$ and outputs the result.
    
    \item 
    If all results from the asserter and the validators match, the result is deemed valid and accepted. The asserter receives $R_A$ and each validator receives $R_V/n$. Otherwise, an arbitration protocol is initiated to determine the correctness of the asserter's result and that of each of the $n$ validators. We  assume that the arbitration process is always accurate. If the asserter's result is upheld by the arbitration protocol, it receives a reward of $R_A$; if not, it is penalized with $S$. Similarly, each validator deemed incorrect will also be penalized with $S$.
    Let $S_{\mathsf{total}}$ represent the total amount slashed from dishonest validators and the asserter. Each validator deemed honest, out of the $m$ deemed-honest ones from a total of $n$ validators, receives $R_V/m + S_{\mathsf{total}}/m$ if $m\geq 1$.
\end{enumerate}

In the PoSP protocol, the combined values of all rewards and penalties are carefully designed to ensure that the amount paid out never exceeds the amount collected. Excess tokens, arising from scenarios like untriggered challenges or accumulated penalties, are burned or go to the protocol reserve. This cautious approach also helps prevent manipulative behavior and maintains the system's integrity.

\subsection{Assumptions and Analysis}

Then we show that, under certain conditions, PoSP protocol has a unique Nash Equilibrium in pure strategies \ref{main-theorem} that all nodes output the correct result. This ensures there is no economic incentive for nodes to produce erroneous results. To analyze this protocol, we additionally define 

\begin{itemize}
    \item $C$:  the computational cost for a reference implementation of $f(x)$, established to ensure that all nodes allowed in the network can accurately compute the function within this cost
    \item $U_1$: maximum profit that the asserter can gain if he acts dishonestly and the challenge mechanism is not triggered
    \item $U_2$: maximum profit that the asserter can gain if the challenge mechanism is triggered and he controls all the validator nodes
\end{itemize}

\begin{assumption}
    \label{earn-more-than-cost}
    We assume that $S>nC$, $R_A-C>-S$ and $R_V/n-C>-S$.    
\end{assumption}
These inequalities imply that, it is more beneficial for each node to receive the reward than to be penalized. Then we show that given Assumption \ref{earn-more-than-cost}, the Byzantine validator who does not collude with the asserter has a dominant strategy to output the correct result.

\begin{table}[htbp]
\centering
\captionsetup{skip=10pt} 
\begin{tabular}{|P{3.8cm}|P{3.2cm}|P{3.6cm}|}
\hline
Strategy & Asserter Correct & Asserter Incorrect \\ \hline
Validator Correct & \( \geq R_V/n-C \) & \( \geq R_V/n -C + S/n \) \\ \hline
Validator Incorrect & \(-S\) & \(\leq R_V/n \) \\ \hline
\end{tabular}
\caption{Payoff for Validator in a Non-Collusive Scenario}
\label{not-collude-validator-payoff}
\end{table}

Table \ref{not-collude-validator-payoff} gives a game in a bimatrix format. The first row is the correctness of the asserter's outcome, and the first column is the correctness of the validator's outcome, who does not collude with the asserter. The value in the table is the payoff of the validator. Then, the following property is straightforward.

\begin{property}
\label{non-colluding-validator-always-honest}
    Under Assumption \ref{earn-more-than-cost}, if one validator does not collude with the asserter, its dominant strategy is to output the correct result.
\end{property}

\begin{assumption}
    \label{collusion-rate-assumption}
    We assume that at most a fraction $r$ of the total nodes in the network are Byzantine (that is, might deviate from the correct execution of $f$), which also implies that the fraction of the nodes in the network that the Byzantine asserter controls is at most $r$. 
\end{assumption}

\begin{theorem}
\label{main-theorem}
    Under Assumption \ref{earn-more-than-cost} and Assumption \ref{collusion-rate-assumption}, each asserter has a dominant strategy to act honestly and output the correct result if $$R_A+pS-(1-p)U_1-C>pr^n\left(U_2+S\right).$$
    This also means our system has a unique Nash Equilibrium in pure strategies, where each node behaves honestly and reliably outputs the correct result.
\end{theorem}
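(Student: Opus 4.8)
The plan is to prove the claim by a dominant-strategy argument for the asserter, treating it as the strategic entity that coordinates its coalition (of at most an $r$-fraction of nodes by Assumption~\ref{collusion-rate-assumption}), and then to combine this with Property~\ref{non-colluding-validator-always-honest} to pin down the unique pure-strategy equilibrium. Fixing an arbitrary asserter, I would compare the expected payoff of its two pure strategies: computing and reporting $f(x)$ correctly (\emph{honest}) versus reporting any other value (\emph{dishonest}), and show that the honest payoff strictly exceeds the best achievable dishonest payoff exactly when the displayed inequality holds. The honest payoff is immediate: the asserter pays $C$ and in either protocol branch collects $R_A$. Indeed, if no challenge fires it is paid $R_A$ directly; if a challenge fires, every non-colluding validator returns the correct value by Property~\ref{non-colluding-validator-always-honest}, so all reported results agree with the correct assertion, arbitration is never invoked, and the asserter again receives $R_A$. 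Hence the honest payoff equals $R_A - C$.

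Next I would upper-bound the dishonest payoff by splitting on the two independent protocol events. With probability $1-p$ no challenge fires, and by the definition of $U_1$ the profit is at most $U_1$. With probability $p$ a challenge fires, and here I split further on whether the asserter controls \emph{all} $n$ sampled validators. If it does, it can force every reported value to match its incorrect assertion, so its profit is at most $U_2$ by definition. If it does not, then at least one sampled validator is non-colluding and hence honest by Property~\ref{non-colluding-validator-always-honest}; this forces a discrepancy, triggers the (always-accurate) arbitration, the incorrect assertion is rejected, and the asserter is slashed $S$, giving payoff at most $-S$. Writing $q$ for the probability that the asserter controls all $n$ validators, the dishonest payoff is at most $(1-p)U_1 + p\bigl(qU_2 + (1-q)(-S)\bigr)$.

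The key quantitative step is the bound $q \le r^n$. Since the $n$ validators are drawn without replacement from a network in which the coalition owns at most an $r$-fraction, the probability all $n$ land on coalition nodes is $\prod_{k=0}^{n-1}\tfrac{rN-k}{N-k}\le r^n$, because $\tfrac{rN-k}{N-k}\le r$ for every $k\ge 0$ when $r\le 1$. Using $U_2 + S \ge 0$ so that $q(U_2+S)-S$ is monotone in $q$, I substitute $q\le r^n$ to get the dishonest payoff at most $(1-p)U_1 + p r^n U_2 - p(1-r^n)S$. Requiring $R_A - C$ to strictly exceed this and rearranging gives exactly $R_A + pS - (1-p)U_1 - C > p r^n(U_2+S)$, the hypothesis, so honesty strictly dominates. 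Since every non-colluding validator already has honesty as a dominant strategy and the asserter's dominant strategy is now honesty, the all-honest profile is a dominant-strategy equilibrium and therefore the unique Nash equilibrium in pure strategies.

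I expect the main obstacle to be the dishonest-case accounting rather than the algebra: one must justify cleanly that in the ``not all controlled'' branch the relevant upper bound on the asserter's payoff is $-S$, which requires care about how the slashed amount $S_{\mathsf{total}}$ is redistributed to deemed-honest validators, including any coalition members who opt to report correctly, and whether such redistribution could lift the coalition's profit above $-S$. I would handle this by tracking the asserter node's own payoff and folding any coalition side-gains into the conservative maximum-profit quantities $U_1$ and $U_2$. The monotonicity condition $U_2 + S \ge 0$ and the legitimacy of replacing $q$ by its upper bound $r^n$ also deserve an explicit check.
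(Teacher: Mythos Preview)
Your proposal is correct and follows the same dominant-strategy comparison as the paper: lower-bound the honest payoff, upper-bound the dishonest payoff by conditioning on whether the challenge fires and whether all sampled validators are in the coalition, bound the all-controlled probability by $r^n$, and rearrange. The one substantive difference is in the accounting unit. The paper treats the asserter as a coalition and explicitly carries the term $\tfrac{R_V}{n}\,\mathbb{E}[k]$ (expected validator reward flowing to coalition members) on \emph{both} sides of the comparison; these terms then essentially cancel in the rearrangement, leaving exactly the displayed inequality. Your cleaner route---tracking only the asserter node's individual payoff so that the ``not all controlled'' branch is bounded by $-S$ outright---reaches the same endpoint without that bookkeeping, and your flagged concern about coalition side-gains in that branch is precisely what the paper's $\mathbb{E}[k\mid m=i]$ terms are there to absorb. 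Either accounting works; just be consistent, since your opening sentence frames the asserter as the coalition while your resolution switches to the individual node. The remaining steps (the $q\le r^n$ bound, the monotonicity via $U_2+S\ge 0$, and the appeal to Property~\ref{non-colluding-validator-always-honest} for uniqueness) match the paper.
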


\begin{proof}
    The proof is provided in the Appendix \ref{appendix-proof}.
\end{proof}

%%% Local Variables:
%%% mode: latex
%%% TeX-master: "main"
%%% End:

\section{Application to Decentralized AI Inference Network Design}

Decentralized AI inference network has gained popularity in recent times due to rapid growth of AI and strong demand from the industry. Such a network leverages the computational power of a wide array of individual providers who contribute to the AI server pool in a permissionless manner. A well-designed decentralized AI inference network is able to balance the supply with the demand for AI inference capabilities. 

In a decentralized setting, AI inference services are performed by these distributed nodes. However, these nodes are not guaranteed to behave honestly, so simply executing a model and trusting its output is insufficient. For instance, if a user wants to perform AI inference using a powerful model like LLaMA3-70B, an execution node might choose to use a less capable model like LLaMA3-7B to save computational power. Therefore, it is crucial to design additional mechanisms that incentivize honest execution and penalize nodes that act dishonestly.

SpML is designed to verify the integrity of the system with a minimal increase in computational overhead for security purposes. The following sections will detail the application of the PoSP protocol in establishing a robust decentralized AI inference network.

\subsection{Design Principles \& Assumptions}

% \raluca{go through each assumption from posp and explain why it holds in this inference setting either because of the ML setting or because of our crypotgrapahic or security guarantees, e.g. BFT; arbitration is always accurate, f(x) is deterministic, assumption 1 and 2. Maybe after showing the protocol}

\noindent\textbf{\sffamily Deterministic ML Execution.} One critical assumption that PoSP makes, which spML inherits, is that the function $f$ must be a determinstic function. Achieving deterministic execution in ML is notoriously diffcult due to the inherent inconsistent nature of floating point arithmetics, as outlined in \cite{zheng2021agatha}. For example, multiplying two floating point numbers on different software/hardware configuration might render different reuslts.

To combat the inherent inconsistencies caused by floating-point calculations in ML, we follow similar practices from prior work \cite{zheng2021agatha,conway2024opml}, which implements fixed-point arithmetic and software-based floating-point libraries to ensure determinsitic ML execution. To fix the randomness, both the asserter and the validator will be assigned the same random seed. This approach ensures uniform, deterministic ML executions, enabling the use of a deterministic state transition function for the ML process, enhancing reliability in decentralized environments.

\medskip
\noindent\textbf{\sffamily Minimizing On-chain Operations.} Given the extensive usage in AI inference networks, processing or recording every AI inference outcome on the blockchain is impractical due to scalability constraints. Instead, AI inferences are computed off-chain by decentralized servers, which then relay the results and their digital signatures directly to users, bypassing the on-chain mechanism. On-chain operations are only necessary during arbitration, which can be avoided if all nodes act economically rational. This approach significantly reduces the blockchain's load while ensuring users receive authenticated and accurate inference results.

Critical functions, such as posting overall balance calculations at set intervals, are conducted on-chain to ensure transparency and security. Additionally, the protocol allows for on-chain handling of challenge mechanisms, enabling transparent and secure resolution of disputes or anomalies detected off-chain within the blockchain framework. This ensures both integrity and accountability in the system's operations.

% \medskip
% \noindent\textbf{\sffamily Stateless Design in ML Inference.} While applications such as inference might appear stateful to users—due to the ability to engage in ongoing interactions—they are fundamentally stateless. In our framework, each query is treated as independent; any necessary historical context is encapsulated within each new request, thereby maintaining statelessness throughout the ML process.

% \medskip
% \noindent\textbf{\sffamily Permissionless Network Participation.} We assume that the ML model abstracted as $f(x)$ is public knowledge, and any compute providers can join the network to host the model. 

% In our network, anyone can join the network, gain full access to the ML model \( f(x) \), and contribute by running an AI server. The function \( f(x) \), pivotal for the AI inference process, is established as common knowledge within the network. This inclusivity ensures that the model is capable for validation, promoting the security of the network.

\subsection{System Setup and Threat Model} 
Our system consists of users, executors, orchestrators and a blockchain. An executor chosen to run the query of a user is called an asserter and the one chosen to rerun the query of a user is called a validator.

We assume there are a total of $N$ executor nodes. Each executor is required to deposit at least $S$ in order to be considered as a valid executor. Out of these, we assume that at most a fraction $r$ of them are malicious. A malicious executor can behave in arbitrary ways including colluding with other malicious executors. 

We assume there are a total of $3f+1$ orchestrators and at most $f$ of them are malicious. The rest of them are trusted for availability and security. The orchestrators implement a Byzantine Fault Tolerance (BFT) consensus with state machine replication to tolerate the Byzantine behavior of the malicious nodes. In our protocol, we assume all actions taken from the orchestrator nodes are under BFT consensus. 

In addition, these orchestrator nodes jointly run a distributed random beacon protocol~\cite{sokbeacon} that tolerates up to $f$ malicious nodes to generate a public random value at the beginning of each epoch. Each node can query from this randomness beacon at epoch $t$ and obtain $\tau_t$ as the random seed. We assume that the random value is a block cipher key (and if it is not, one can apply a key derivation function). 

We further assume that the blockchain is trusted for availability and security. It maintains a trusted list of all executors, a trusted PKI that stores public keys for all users and executors and the balances of all parties. At the beginning of the protocol, users and executor nodes query the trusted PKI to obtain copies of all orchestrator nodes' public keys. We assume that executors and users have the correct list of public keys for the orchestrators and we discussed in the next version of this document how to perform membership changes. The orchestrator nodes also keep copies of the public keys of all registered users and executor nodes. 

Communication between any two entities in this system happens over TLS.

\subsection{Notations and Cryptographic Building Blocks.}

\newcommand{\Seed}{\mathsf{Seed}}
\newcommand{\seed}{\mathsf{seed}}
\newcommand{\pk}{\mathsf{pk}}
\newcommand{\user}{\textsf{user}}
\newcommand{\Commit}{\textsf{Commit}}

% We use $\Commit(x, r)$ to denote a cryptographic commitment~\cite{blum1983coin,damgaard1998commitment} to input $x$ (with $r$ as fresh randomness). The cryptographic commitment is both hiding and binding.

We use $\textsf{PRF}$ to denote a pseudorandom function~\cite{prf}, which is a function that can be used to generate output from a random seed and a data variable, such that the output is computationally indistinguishable from truly random output. We define two deterministic sampling functions, $\texttt{Bucket} : \mathcal{S} \times \mathbb{N} \to {1, 2, \dots, N}$ and $\texttt{Random} : \mathcal{S} \to [0, 1)$, as follows:

% We use $\texttt{Hash}$ to denote a cryptographic hash function. Let $\Seed$ be a random variable representing the random seed uniformly distributed over a suitable range $\mathcal{S}$. We define two deterministic sampling functions, $\texttt{Bucket} : \mathcal{S} \times \mathbb{N} \to {1, 2, \dots, N}$ and $\texttt{Random} : \mathcal{S} \to [0, 1)$, as follows:
\begin{itemize}
    \item 
$\texttt{Bucket}(\seed, \mathsf{string}, N) = \mathsf{PRF}_{\seed}(\mathsf{string}) \mod N$: maps a random block cipher key $\mathsf{seed}$ and a unique string to a random but deterministic value in $\{1, \dots, N\}$
    
    \item $\texttt{Sampled}(\seed, \mathsf{uniquestr}, p) = [ \mathsf{PRF}_{\seed}(\mathsf{uniquestr}) < p \times \mathsf{PRFMax}$]: returns true with approximately probability $p$ based on a random block cipher key $\mathsf{seed}$ and a unique string.
\end{itemize}

To denote a digital signature over the message $x$, we use $\sigma_x$. To prevent replay attacks, each user request is assigned a unique request ID (\textsf{reqid}). We assume that every digital signature \(\sigma_x\) is generated over the combination of the original message \( x \) and this \textsf{reqid}.

\subsection{The SpML protocol} 

In this section, we describe spML, a concrete instantiation of the PoSP protocol to enable verifiable machine learning inference. 

% Each transaction includes the user's address, the asserter's address, the amount of transaction fee, the validator's address if challenged and the signatures by orchestrator nodes \yz{TODO}. 
The orchestrators have a record of all status of pending transactions (the transactions that haven't been finalized and posted to the blockchain yet) and the updates of the balances of the users and the executors. %The orchestrator will post and execute transactions in a FIFO manner.

The user has an input $x$ and wants to compute $f(x)$, where $f$ is a machine learning model. The spML protocol \label{ai-protocol}is designed as follows.

\paragraph{Basic Protocol}
\begin{description}
\item[Step 1: User Request Submission]
The user sends their request $(x, \textsf{user\_nonce})$ and its signature $\sigma_{(x, \textsf{user\_nonce})}^{\textsf{user}}$ to all orchestrator nodes.
    \begin{enumerate}
        \item Each orchestrator node queries $\pk_{\textsf{user}}$ from the PKI and calculates $\textsf{reqid} := \textsf{PRF}(\pk_{\textsf{user}} || x || \textsf{user\_nonce})$.
        \item Verifies that $\sigma_{(x, \textsf{user\_nonce})}^{\textsf{user}}$ is a valid signature of $(x, \textsf{user\_nonce})$.
        \item Orchestrators engage in a BFT agreement to accept this \textsf{reqid}. This agreement ensures that the \textsf{reqid} has not been processed before and prevents duplicate processing. If the BFT agreement is successful and the signature is valid, the request $(x, \textsf{reqid})$ is accepted and stored in the pending transaction pool of each orchestrator node.
    \end{enumerate}

\item[Step 2: Orchestrator Agreement and Asserter Selection]
The orchestrators agree (through BFT) on processing the user's request $(x, \textsf{reqid})$ at the current request processing epoch $t_{\textsf{req}}$. Each of them does the following:
\begin{enumerate}
    \item Queries the random seed $\tau_{t_{\textsf{req}}}$ based on this epoch and reuses the stored $\pk_{\textsf{user}}$ from Step 1.
    % \raluca{shouldn't they already have PK user from the request submission and added to the log in reqid?}
    \item Calculates $i := \texttt{Bucket}(\tau_{t_{\textsf{req}}}, \pk_{\textsf{user}}||x||\textsf{reqid},N)$ to select the asserter (node $i$).
    \item Signs over $(x, \textsf{reqid})$ (denote orchestrator $k$'s signature as $\sigma_{(x,\textsf{reqid})}^k$) and sends $(x, \textsf{reqid}, \sigma_{(x,\textsf{reqid})}^k)$ to the selected asserter, node $i$.
\end{enumerate}

\item[Step 3: Asserter Execution and Response]
The asserter, node $i$, upon receiving messages from orchestrators, does the following (within a defined timeout $T_{assert}$):
\begin{enumerate}
    \item Verifies $\sigma_{(x,\textsf{reqid})}^k$ is a valid signature over $(x, \textsf{reqid})$ signed by the orchestrator $k$.
    \item If the asserter receives at least $2f+1$ such verified messages for the same $(x, \textsf{reqid})$ from distinct orchestrators, it accepts the request and calculates $y_i = f(x)$.
    \item Signs the result: $\sigma_i := \textsf{Sign}_{\textsf{asserter}_i}(x, \textsf{reqid}, y_i)$.
    \item Sends $(x, \textsf{reqid}, y_i, \sigma_i)$ back to all orchestrator nodes.
    \item Handling Asserter Failure: If node $i$ fails to respond to orchestrators with a valid $(x, \textsf{reqid}, y_i, \sigma_i)$ within timeout $T_{assert}$, orchestrators will BFT-agree on this failure. The request will be re-assigned and the asserter may be penalized (see Timeout and Failure Handling Protocol).
\end{enumerate}
    Each orchestrator node, upon receiving $(x, \textsf{reqid}, y_i, \sigma_i)$ from asserter $i$:
    \begin{enumerate}
        \item Verifies the asserter's signature $\sigma_i$ on $(x, \textsf{reqid}, y_i)$.
        \item If valid, forwards $(y_i, \sigma_i, \textsf{reqid})$ to the user.
    \end{enumerate}
The user can proceed to consume the result $y_i$ once they receive $(y_i, \sigma_i, \textsf{reqid})$ from at least $2f+1$ orchestrators, all containing the same $y_i$ value for the given $\textsf{reqid}$, and after personally verifying at least one valid $\sigma_i$.

\item[Step 4: Orchestrator Post-Execution and Challenge Decision]
The orchestrators verify the asserter's signature $\sigma_i$ (if not already done). They then agree (through BFT) on processing the asserter's result $(y_i, \sigma_i)$ for request $\textsf{reqid}$. This occurs before or at the start of the challenge decision epoch $t_{\textsf{chal}}$. Each orchestrator then does the following:
\begin{enumerate}
    \item Queries the random seed $\tau_{t_{\textsf{chal}}}$ based on this epoch $t_{\textsf{chal}}$ and $\pk_{\textsf{user}}$ from the PKI.
    \item Calculates $b := \texttt{Sampled}(\tau_{t_{\textsf{chal}}}, \pk_{\textsf{user}}||x||\textsf{reqid}, p)$, where $p$ is a pre-determined threshold.
    \item If $b = 1$, then it initiates the Challenge Protocol for $(x, \textsf{reqid}, y_i, \sigma_i)$. Otherwise, it records a reward of $R < B/2$ for the asserter $i$ locally (associated with $\textsf{reqid}$), and concludes this request's active processing.
\end{enumerate}
\end{description}

\paragraph{Challenge Protocol}
\begin{description}
\item[Step 1: Initiation and Validator Selection]
If the Challenge Protocol is initiated for $(x, \textsf{reqid}, y_i, \sigma_i)$, each orchestrator does the following:
\begin{enumerate}
    \item Uses the same random seed $\tau_{t_{\textsf{chal}}}$ (from Basic Protocol Step 4.1).
    \item Calculates $j := \texttt{Bucket}(\tau_{t_{\textsf{chal}}}, \pk_{\textsf{user}}||x||\textsf{reqid}, N)$ to select the validator (node $j$).
        (Note: Asserter $i$ cannot be selected as validator $j$ for the same request. If $j=i$, a deterministic rule, e.g., $j=(i+1) \pmod N$, should apply.)
    \item Sends $(x, \textsf{reqid})$ along with its orchestrator signature $\sigma_{(x,\textsf{reqid})}^k$ (from Basic Protocol Step 2.3) to the validator, node $j$.
\end{enumerate}

\item[Step 2: Validator Execution and Response]
The validator, node $j$, upon receiving messages from orchestrators, does the following (within a defined timeout $T_{validate}$):
\begin{enumerate}
    \item Verifies $\sigma_{(x,\textsf{reqid})}^k$ is a valid signature over $(x, \textsf{reqid})$ signed by the orchestrator $k$.
    \item If the validator receives at least $2f+1$ such verified messages for the same $(x, \textsf{reqid})$ from distinct orchestrators, it accepts the task.
    \item Calculates $y_j = f(x)$.
    \item Signs the result: $\sigma_j := \textsf{Sign}_{\textsf{validator}_j}(x, \textsf{reqid}, y_j)$.
    \item Sends $(x, \textsf{reqid}, y_j, \sigma_j)$ back to all orchestrator nodes.
    \item Handling Validator Failure: If node $j$ fails to respond to orchestrators with a valid $(x, \textsf{reqid}, y_j, \sigma_j)$ within timeout $T_{validate}$, orchestrators will BFT-agree on this failure. The request will be reassigned and Node $j$ may be penalized. (see Timeout and Failure Handling Protocol).
\end{enumerate}

\item[Step 3: Orchestrator Verification and Outcome]
Each orchestrator, upon receiving $(x, \textsf{reqid}, y_j, \sigma_j)$ from validator $j$:
\begin{enumerate}
    \item Verifies the validator's signature $\sigma_j$ on $(x, \textsf{reqid}, y_j)$.
    \item If valid, it compares $y_j$ with the previously received $y_i$ for the same $\textsf{reqid}$.
    \item If $y_i = y_j$: Records a reward of $R < B/2$ for each of the asserter $i$ and the validator $j$ locally (total $2R < B$), and concludes this request's active processing.
    \item Otherwise (if $y_i \neq y_j$): Initiates the Arbitration Protocol, providing $(x, \textsf{reqid}, y_i, \sigma_i, y_j, \sigma_j)$.
\end{enumerate}
\end{description}

\paragraph{Arbitration Protocol}
\begin{description}
\item[Step 1: Orchestrator Request to Smart Contract]
If the Arbitration Protocol is initiated, each of the orchestrators sends an arbitration request containing $(\texttt{Arbitration}, x, \textsf{reqid}, y_i, \sigma_i, y_j, \sigma_j)$ along with its own signature on this entire tuple, to the arbitration smart contract on the blockchain.

\item[Step 2: Smart Contract Verification and Judgment]
The smart contract does the following:
\begin{enumerate}
    \item Queries the public keys of the orchestrators and verifies the signatures on the arbitration requests.
    \item If at least $2f+1$ valid and identical arbitration requests are received from distinct orchestrators, it continues. Otherwise, it aborts or awaits more identical requests.
    \item Verifies that $\sigma_i$ is a valid signature of asserter $i$ on $(x, \textsf{reqid}, y_i)$.
    \item Verifies that $\sigma_j$ is a valid signature of validator $j$ on $(x, \textsf{reqid}, y_j)$.
    \item It computes (or securely verifies via ZKP) the true result $y_{true} = f(x)$.
    \item Identifies honest and dishonest parties:
        \begin{itemize}
            \item If $y_i = y_{true}$ and $y_j \neq y_{true}$: Asserter $i$ is honest, validator $j$ is dishonest.
            \item If $y_i \neq y_{true}$ and $y_j = y_{true}$: Asserter $i$ is dishonest, validator $j$ is honest.
            \item If $y_i \neq y_{true}$ and $y_j \neq y_{true}$ (and $y_i \neq y_j$): Both are dishonest. (If $y_i=y_j \neq y_{true}$, challenge should not have led to arbitration, but if it does, both made the same error).
        \end{itemize}
    \item Slashes the node(s) that generated the wrong result(s) with a certain amount $S$ of their deposit.
    \item Records the arbitration outcome (who was correct/incorrect, slashed amounts).
\end{enumerate}

\item[Step 3: Orchestrator Post-Arbitration Processing]
Each orchestrator queries the arbitration result from the smart contract for $\textsf{reqid}$.
\begin{itemize}
    \item If asserter $i$ was deemed honest, records a reward (e.g., $R_{base} + S_{collected\_from\_j}$) for asserter $i$.
    \item If validator $j$ was deemed honest, records a reward (e.g., $R_{base} + S_{collected\_from\_i}$) for validator $j$.
    \item If both were dishonest, their deposits $S$ are slashed.
\end{itemize}
The user is also informed of the arbitrated correct result $y_{true}$.
\end{description}

\paragraph{Timeout and Failure Handling Protocol}
\begin{description}
    \item[Case 1: Asserter Fails to Respond (Basic Protocol Step 3)]
    If asserter $i$ fails to provide a valid signed result within $T_{assert}$:
    \begin{enumerate}
        \item Orchestrators BFT-agree on the timeout.
        \item Reassign asserter: Orchestrators may select a new asserter $i'$ (e.g., $i' := \texttt{Bucket}(\tau_{t_{\textsf{req}}}, \pk_{\textsf{user}}||x||\textsf{reqid}|| \textsf{attempt\_2}, N)$). The protocol restarts for this request from Basic Protocol Step 2.3 with the new asserter. Node $i$ may be penalized (e.g., small penalty recorded locally, or via smart contract if persistent).
    \end{enumerate}
    \item[Case 2: Validator Fails to Respond (Challenge Protocol Step 2)]
    If validator $j$ fails to provide a valid signed result within $T_{validate}$:
    \begin{enumerate}
        \item Orchestrators BFT-agree on the timeout.
        \item Reassign Validator: Similar to asserter failure, the protocol reassigns another validator to complete this request, and validator $j$ may be penalized.
    \end{enumerate}
\end{description}

\paragraph{Settlement Protocol}
This protocol is executed periodically, e.g., at the end of each settlement epoch $t_{\textsf{settle}}$.
\begin{description}
\item[Step 1: Orchestrator Agreement on Balance Updates]
After each settlement epoch, each orchestrator:
\begin{enumerate}
    \item Identifies all transactions (identified by $\textsf{reqid}$) that have concluded (either via Basic, Challenge, or Arbitration outcomes) within this epoch and whose financial implications (rewards, slashes) have not yet been posted on the blockchain.
    \item Calculates the net updates to the balances of all involved users and nodes (asserters, validators) based on recorded rewards and slashes.
    \item Crucially, all orchestrators engage in a BFT agreement protocol to arrive at a single, consistent list of balance updates for the epoch.
\end{enumerate}

\item[Step 2: Submission to Smart Contract]
Once BFT agreement on the balance updates is reached:
\begin{enumerate}
    \item One or more (or all) orchestrators submit this agreed-upon batch of balance updates, along with evidence of the BFT agreement (e.g., $2f+1$ signatures from orchestrators on the hash of the batch), to the settlement smart contract on the blockchain.
    \item The smart contract verifies the BFT agreement proof (e.g., the $2f+1$ orchestrator signatures).
    \item If verified, the smart contract applies the batch of balance updates to the users' and nodes' accounts managed by it.
\end{enumerate}
\end{description}

\subsection{Discussion} \label{s:discussion}

% \raluca{Add a discussion on the different tiers within which you have the same compute power and the same cost. }

% \paragraph{Inconsistent Response from the Asserter and Orchestrator} To reduce query latency, we ask the asserter to send the result immediately back to the user, and send the commitment of the result to the orchestrators. The asserter $i$ could, in theory, send a wrong output $y'$ to the user but sends the correct commitment $y_i = f(x)$ to the orchestrator. Our protocol provides no punishment for this inconsistent behavior from the asserter. However, in the next epoch, the orchestrators will send the asserter's revealed results back to the user, and there are guarantees that if this result is not correct, the asserter will get punished.

\paragraph{Malicious Users} Malicious users could collude with malicious executors. If the selected asserter is malicious, then a malicious user could collude with the asserter so that the asserter gains rewards for computing nothing $1-p$ percent of the time. However, the user has to pay for the query and our protocol guarantees that the net reward for the malicious user and asserter is negative.

\paragraph{Malicious Executors} Assuming that a fraction of $r$ executor nodes are malicious, then the probability of selecting two malicious executors is $r^2$. In theory, our user could receive two incorrect but same results from the asserter and the validator if they are both malicious. 

However, when the asserter responds to the orchestrator, it has no knowledge of and cannot control which node will be selected next. Therefore, if the asserter wants to cheat and provide a wrong response, it can only pass the check $(1-p)+pr$ percent of the time (either the challenge does not happen, or the challenge protocol chooses another malicious node). 

Even though malicious executors could choose not to follow the protocol, our mechanism design guarantees that the net reward for them is negative, and any economically rational executor would behave honestly.

\paragraph{Malicious Orchestrators} In our protocol, up to $f$ orchestrator nodes could be malicious. These malicious orchestrators cannot interfere with the correctness of the BFT concensus since there are $2f+1$ honest nodes. They could, however, collude with malicious users and executors. If a malicious validator is selected, the malicious orchestrator node could tell the validator the computed f(x) from the asserter. Then, the malicious validator could free-ride the asserter's result and save the computational cost of computing $f(x)$. Our protocol does not prevent against this cost-saving behavior. However, this behavior does not impact the security of our protocol, since our mechanism design guarantees that any economically rational asserter will compute the correct result. The only case where the user fails to obtain the correct result is when both the asserter and validator are malicious, and the asserter computes the wrong result. As we analyze above, the expected reward for this asserter is negative in this case.  

% Since our mechanism design guarantees that 

% Our design does not prevent against this scenario, but the  

% This condition is taken into consideration in our Nash equilibrium calculation (see Proposition \ref{ai-prop}), where we assume that at most a fraction $r$ of the executor nodes are malicious and can collude with each other.

% % this aligns with Assumption \ref{collusion-rate-assumption} and hence Proposition \ref{ai-prop} in Section \ref{subsection:analysis} holds.

\paragraph{Unresponsive Executors} In our protocol, we assume all executor nodes are available up to a certain limit. An executor node can reject a request if the number of requests it is concurrently handling exceeds this limit, but it is expected to respond to user requests if the limit is not met. If an asserter or validator does not respond within a specified number of epochs, the orchestrators generate a new random seed from the random beacon and send the requests to a new asserter or validator, potentially penalizing the unresponsive node. This setup incentivizes executor nodes to respond promptly. Even if an executor does not respond, the only negative effect for the user is a certain period of delay.

\paragraph{Multi-tier Executors} For a network of nodes with heterogeneous compute power, they can be grouped into tiers of executors based on compute power. This ensures balanced workloads and fair resource allocation, with consistent performance within each tier. Users can choose different tiers based on their requirements and budget, optimizing for either high computational power or cost-efficiency. In the spML protocol, orchestrators assign tasks within the same tier to make sure both the asserter and the validator are able to compute the user's request $f(x)$, hence enhancing verification robustness and network integrity. 

\paragraph{Aggregating Signatures for Reduced Cost}The current protocol design requires an orchestrator to send multiple signatures to the blockchain. It can be reduced to only one signature by using a threshold signature scheme.

\subsection{Analysis} \label{subsection:analysis}

% \sq{TODO}

% \begin{property}
% \label{user-behavior}
% The rational user will never collude with the server in the network, and the dominant strategy for the user is always act honestly.
% \end{property}

% This is because it is always impossible to slash the honest node. Then we can assume that the user will not collude with any server in the network. Moreover, if the challenge mechanism is triggered, the rational user will indeed send his input to another server B, because he has the utility to gain the transaction fee discount and get to know the true outcome of his request.

% Given Property \ref{user-behavior}, we can assume that the user will always act honestly. In this case, the orchestrator, i.e. the user, is honest, and the spML protocol in Section \ref{ai-protocol} will reduce to the PoSP protocol in Section \ref{general-model}.

\begin{proposition}
\label{ai-prop}
If $$p>\frac{C}{(1-r)S+(1-2r)R},$$ the system has a unique Nash Equilibrium in pure strategies, where every participant outputs the correct result.
\end{proposition}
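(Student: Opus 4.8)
The plan is to derive this proposition as a direct instantiation of the general PoSP result, Theorem~\ref{main-theorem}, with the spML parameters plugged in. First I would establish the dictionary between the abstract PoSP model and the spML deployment. Since each challenge in spML employs a single validator, we set $n=1$; since the matching reward for both the asserter and the validator is $R$, we set $R_A = R_V = R$; and by Assumption~\ref{collusion-rate-assumption} the malicious fraction is $r$, so the probability that the lone validator is one of the asserter's own nodes is $r^n = r$. A key point to argue carefully here is that this probability is genuinely $r$ and not larger: because the validator is drawn from the random beacon \emph{after} the asserter has committed its signed result, and the draw is unpredictable and uninfluenceable by the asserter, the asserter cannot bias the selection toward the nodes it controls.

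Next I would compute the two adversarial utilities $U_1$ and $U_2$ in the spML setting. If the asserter cheats and the challenge is not triggered, it still collects the reward $R$ while incurring essentially zero computational cost, so $U_1 = R$. If the challenge is triggered but the asserter controls the selected validator, both colluding nodes output the same wrong value, pass the equality check in Challenge Protocol Step~3, and collect the full pair of rewards; hence the colluding entity gains $U_2 = R_A + R_V = 2R$. I would also verify the side conditions of Assumption~\ref{earn-more-than-cost}: with $n=1$ they reduce to $S > C$ and $R - C > -S$, both guaranteed by the protocol's deposit and reward design. Finally, I would invoke Property~\ref{non-colluding-validator-always-honest} to conclude that a validator not colluding with the asserter is honest by dominance.

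To finish, I would substitute $R_A = R$, $U_1 = R$, $U_2 = 2R$, $n = 1$, and $r^n = r$ into the hypothesis of Theorem~\ref{main-theorem}, namely $R_A + pS - (1-p)U_1 - C > pr^n(U_2 + S)$. The left-hand side collapses to $pR + pS - C$ and the right-hand side to $pr(2R + S)$; collecting terms gives $p\bigl[(1-r)S + (1-2r)R\bigr] > C$, which is exactly the stated condition $p > C/\bigl((1-r)S + (1-2r)R\bigr)$. Theorem~\ref{main-theorem} then certifies that honesty is the asserter's dominant strategy, and combined with the validator's dominant honesty this makes the all-honest profile the unique pure-strategy Nash equilibrium.

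The main obstacle I anticipate is pinning down $U_2 = 2R$ correctly---one must recognize that a single entity controlling both the asserter and the validator captures \emph{both} rewards when the duplicated wrong answer escapes arbitration---and justifying rigorously that the collusion probability is exactly $r$ rather than a quantity the adversary can inflate; this hinges entirely on the anonymity-until-submission property of the beacon-based validator selection. The remaining algebra is routine.
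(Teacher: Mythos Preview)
Your proposal is correct and follows exactly the paper's approach: the paper's own proof is the one-line remark that plugging $n=1$, $R_A=R$, $U_1=R$, and $U_2=2R$ into Theorem~\ref{main-theorem} yields the stated inequality. Your additional justification of why $U_1=R$ and $U_2=2R$ and your verification of the side conditions and the final algebra are all sound elaborations of that same substitution.
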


\begin{proof}

After plugging $n=1$, $R_A=R$, $U_1=R$ and $U_2=2R$ in Theorem \ref{main-theorem}, we can get this result.

\end{proof}

As you can see from Proposition \ref{ai-prop}, the numerator equals to the computational cost for running one ML model, which is considered to be much less than the denominator. This means if we design the value of the reward and penalty appropriately, we only need little extra computational overhead to guarantee the security of the network. 

\subsection{SpML vs. Existing Decentralized AI Solutions}
In this section, we compare spML with the two prevalent methodologies in decentralized AI networks: optimistic fraud proof based approach (opML) and zero knowledge proof based approach (zkML).

\medskip
\noindent\textbf{\sffamily OpML.} Contrary to the heavy cryptographic reliance of zkML, opML adopts a fundamentally different strategy based on dispute resolution mechanisms. The optimistic approach presupposes that participants will act honestly, given the economic disincentives for fraudulent behavior. In the rare event of disputes, opML provides mechanisms for challenge and resolving fraudulent claims, ideally without necessitating heavy computational verification for every transaction. Nevertheless, the reliance on economic incentives and dispute resolution may introduce vulnerabilities for network security.

\medskip
\noindent\textbf{\sffamily ZkML.} At its core, zkML leverages zero-knowledge proofs. In the context of decentralized AI, zkML ensures that computations can be verified for correctness without revealing the underlying data or the specifics of the computation. This characteristic is particularly advantageous for applications requiring stringent data privacy measures. However, the sophistication and computational intensity of generating zero-knowledge proofs present challenges in terms of efficiency and accessibility. 

% \raluca{we have some delays until the randomness beacon epoch, and then the fraud and zK proofs -- maybe you want to say that it is a known time not undetermined?}
% \sq{It is indeed a known time not undetermined, but what we want to mention is that the user do not need to wait for the result provided by the validator in the challenge mechanism and trust the result provided by the asserter, because the asserter is always honest if everyone is rational}

\begin{table}[htbp]
\captionsetup{skip=10pt} 
\centering
\label{ml-comparison}
\begin{tabular}{|>{\Centering}m{2.3cm}|>{\Centering}m{4.0cm}|>{\Centering}m{4.5cm}|>{\Centering}m{5.2cm}|}
\hline
Aspect & opML & zkML & spML \\ 
\hline
Security & Relies on AnyTrust assumption (no pure-strategy Nash equilibrium) & High security through cryptographic proofs &  Security through economic incentives (pure-strategy Nash equilibrium) \\ 
\hline
Delays & Potential delays in dispute resolution & Delays due to proof generation & Almost no delay since rational asserter will act honestly \\ 
\hline
Efficiency & Highly efficient & Limited by computational overhead of proof generation & Highly efficient  \\ 
\hline
% Simplicity & Simple, unless for fraud proof & Complex due to zk proofs & Simple \\ 
% \hline
Overhead & Low computational overhead, unless in the case of disputes & High computational overhead due to the nature of cryptographic proof generation & Low computational overhead, unless in the case of disputes which never happen if everyone is rational \\ 
\hline
\end{tabular}
\caption{Comparison of OpML, ZkML and SpML}
\label{comparison-verification-mechanism}
\end{table}

\noindent Table \ref{comparison-verification-mechanism} compares opML, zkML and spML.

\medskip
\noindent\textbf{\sffamily Security.} The security of opML relies on the AnyTrust assumption. OpML only has a mixed strategy Nash Equilibrium, which means that there is a positive probability for undetected fraud if every node is rational. Conversely, zkML boasts robust security due to its use of cryptographic proofs. The security of spML is based on economic incentives. In spML, the initiation of challenge mechanism is an automated process managed by the protocol itself, rather than relying on the assumption that there will be at least one external validator, as is the case with opML. The pure strategy Nash Equilibrium demonstrated by spML offers evidence that the system can be deemed secure, provided that each node behaves rationally.

\medskip
\noindent\textbf{\sffamily Delays.} In opML, delays exist due to the challenge period, during which a transaction can be challenged with a fraud proof. This is a drawback in scenarios requiring real-time results. zkML faces inherent significant delays due to the computational overhead in proof generation. SpML is designed to mitigate delay issues altogether. Even if the challenge mechanism is triggered, the user does not need to wait for the challenge procedure: the user can trust the result because the dominant strategy for the asserter is to output the correct result, if every node is rational.

\medskip
\noindent\textbf{\sffamily Efficiency.} OpML is recognized for its efficiency, especially when disputes are minimal, suggesting a lightweight protocol suitable for extensive applications. ZkML's efficiency is hampered by the heavy computational load required for proof generation. In contrast, spML is presented as highly efficient as well, which can handle extensive network activity without significant degradation in performance.

% \medskip
% \noindent\textbf{\sffamily Simplicity.} The complexity of ZK proofs contributes to the high complexity of zkML, possibly making it less accessible for broader implementation. For opML, it is usually considered to be simple. However, its implementation can become complex when applied to fraud proof scenarios. In contrast, spML maintains consistent simplicity in implementation due to its arbitration procedure based on voting. This simplicity enhances ease of integration and maintenance, potentially facilitating widespread adoption.

\medskip
\noindent\textbf{\sffamily Overhead.} OpML claims low computational overhead, with the caveat that opML may incur higher overhead during disputes. ZkML's approach results in high computational overhead due to cryptographic processes. SpML also has a low computational overhead. During the challenge mechanism, spML still has a low computational overhead. This is because in spML, the challenge mechanism happens very rarely. Only when the results do not match during the challenge mechanism, spML may incur high overhead during arbitration, but the arbitration never happens if every node is rational.

\subsubsection{Empirical Evaluation}
For this part, we use empirical evaluation to further compare opML, zkML and spML.

For zkML, existing solutions, as demonstrated in \cite{ganescu2024trust,kang2022scaling}, indicate that generating a proof for a nanoGPT model with $1$M parameters takes approximately $16$ minutes. However, for more advanced models like Llama2-70B, which possesses $70,000$ times more parameters than nanoGPT, it is reasonable to expect that generating a single proof could take several days or weeks. Consequently, employing zkML in a decentralized AI inference network may not be practical given the extended time requirements.

In the opML scenario, when the validator initiates the fraud proof procedure and detects the fraud, we assume the penalty for the malicious server is $S$, and the net gain for the validator is $R_C$, accounting for the difference between the reward and the cost of initiating the fraud proof procedure.

\begin{table}[htbp]
\centering
\label{not-collude-validator}
\begin{tabular}{|P{3.8cm}|P{3.2cm}|P{3.6cm}|}
\hline
Strategy & Server Fraud & Server Not Fraud \\ \hline
Validator Check & \( R_C,-S \) & \( -C,R-C \) \\ \hline
Validator Not Check & \( 0,R \) & \( 0,R-C \) \\ \hline
\end{tabular}
\end{table}

\noindent The table above gives a game in a bimatrix format, where the first number in each pair represents the utility to the validator, and the second number represents the utility to the server. We can calculate the probability for the undetected fraud is $(S+R-C)C/[(S+R)(R_C+C)]$ by the mixed strategy Nash Equilibrium, similar to the approach detailed in \cite{mamageishvili2023incentive}. Assuming $R_C=100C$, $R=1.2C$ and $S=150C$, the calculated probability of undetected fraud is $0.98\%$. This implies that if you request AI inference $50$ times per day, you can expect, on average, one undetected fraud approximately every $2$ days.

In contrast, in the spML scenario, assuming the fraction of the Byzantine nodes in the network $r=10\%$, by Proposition \ref{ai-prop}, the probability of triggering the challenge mechanism is $0.736\%$. This translates to only $0.736\%$ additional computational overhead in spML, enabling us to completely avoid fraud and eliminate the need for fraud proof procedures, if all nodes are rational. Hence, spML is the superior choice.

\section{Conclusions and Future Extensions}

In this study, we introduced the PoSP protocol and demonstrated a key application to decentralized AI inference network. Central to its design is the implementation of a unique Nash Equilibrium in pure strategies, ensuring that all rational participants within the network adhere to correct outputs, under certain assumptions. This protocol demonstrates superior performance when compared to zero-knowledge based protocols.

Looking ahead, further exploration into the application of the PoSP protocol within Layer 2 architectures holds promise, particularly by employing our method of sampling multiple nodes to recompute results. This approach can lead to a unique Nash Equilibrium in pure strategies, where every participant acts honestly, directly addressing and potentially solving the concerns highlighted in \cite{mamageishvili2023incentive}. Additionally, there is significant potential for applying PoSP as a verification mechanism within Actively Validated Services (AVS) in restaking protocols. Recent research highlights that fragmented validator sets can compromise security unless stakes are dynamically rebalanced \cite{nag2025economic}. Industry developments underscore a transition towards autonomously verifiable services, reinforcing the necessity for efficient, lightweight verification mechanisms. This exploration could lead to innovative applications that leverage the strengths of PoSP in ensuring robust, efficient, and reliable systems.

% \raluca{what do you mean by this? they are enhancing our security} \sq{We can use PoSP to verify the result output by AVS is correct}

{\bf Acknowledgements.}
	
We are thankful to Ari Juels, Zhongjing Wei, Zhe Ye, Jianzhu Yao, Chenghan Zhou, Yuchen Jin, and Dahlia Malkhi for helpful comments and conversations.

\newpage
\bibliographystyle{plain}
\bibliography{references.bib}

\newpage
\appendix

\section{Proof of Theorem \ref{main-theorem}}\label{appendix-proof}

First, we consider the expected payoff of the asserter if his result is correct. If the asserter's result is correct, all the validators, whether Byzantine or not, have a dominant strategy to output the correct result. Suppose $k$ is the number of nodes controlled by the asserter that are selected as validators when the challenge mechanism is triggered. If the asserter does not commit fraud, the expected payoff for the asserter in this round is at least 
$$(1-p)(R_A-C)+p\left(\frac{R_V}{n} \mathbb{E}\left[k\right] + R_A-C\right).$$

Then, we consider the expected payoff of the asserter if he commits fraud and outputs the incorrect result. If the challenge mechanism is not triggered, the asserter can get a payoff of $U_1$. However, if the challenge mechanism is triggered, fraud might go undetected and the asserter could earn $U_2$ only if all of the $n$ selected validators are Byzantine and submit the same result as the asserter. If $1\leq i<n$ out of $n$ selected validators are Byzantine or even collude with the asserter, their optimal strategy remains to act honestly and report the fraud in order to receive the validation reward. This is because the arbitration process will inevitably be triggered by the presence of at least one honest validator, making any dishonest action be penalized rather than rewarded.

Suppose $\rho \leq r$ is the fraction of Byzantine nodes that potentially submit the same result as the asserter out of all the nodes in the network; this includes nodes that may be controlled by the asserter as well. The expected payoff of the asserter is at most
$$(1-p)U_1+p\rho^nU_2+p\sum_{i=0}^{n-1} \binom{n}{i}\rho^i(1-\rho)^{n-i}\left(\frac{R_V}{n} \mathbb{E}\left[k \middle| m=i \right]-S\right),$$
where $m$ is the number of Byzantine nodes that potentially submit the same result as the asserter.

Hence, the system will have a unique Nash Equilibrium in pure strategies when the Byzantine asserter can obtain a greater profit if it does not commit fraud, i.e., if
$$(1-p)(R_A-C)+p\left(\frac{R_V}{n} \mathbb{E}\left[k\right] + R_A-C\right)>(1-p)U_1+p\rho^nU_2+p\sum_{i=0}^{n-1} \binom{n}{i}\rho^i(1-\rho)^{n-i}\left(\frac{R_V}{n} \mathbb{E}\left[k \middle| m=i \right]-S\right).$$
By rearranging this inequality, we can get
$$R_A+pS-(1-p)U_1-C>p\rho^n\left(U_2+S-\frac{R_V}{n}\mathbb{E}\left[k| m=n \right]\right).$$
Since we always have $\mathbb{E}\left[k| m=n\right] \geq 0$ and $\rho \leq r$, our system will have a unique Nash Equilibrium in pure strategies, if
$$R_A+pS-(1-p)U_1-C>pr^n\left(U_2+S\right),$$
which coincides Theorem \ref{main-theorem}.

\end{document}